 \newtheorem{theorem}{Theorem}[section]
 \newtheorem{corollary}[theorem]{Corollary}
 \newtheorem{proposition}[theorem]{Proposition}
 \newtheorem{definition}[theorem]{Definition}
\theoremstyle{remark}
\newtheorem{remark}[theorem]{Remark}
 \numberwithin{equation}{section}
\def\.{{\cdot}}
\def\<{\langle} \def\>{\rangle}
\begin{document}

\title[On the Pauli group on 2-qubits in Dynamical Systems with pseudofermions]{On the Pauli group on 2-qubits\\ 
in Dynamical Systems with pseudofermions}
\author[F. Bagarello]{Fabio Bagarello}
\address{Fabio Bagarello\endgraf
Dipartimento di  Ingegneria\endgraf 
Universit\'a  di Palermo\endgraf
Viale delle Scienze, I-90128, Palermo, Italy\endgraf
and \endgraf
Istituto Nazionale di Fisica Nucleare\endgraf
Sezione di Catania, Italy}
\email{fabio.bagarello@unipa.it}

\author[Y. Bavuma]{Yanga Bavuma}
\address{Yanga Bavuma\endgraf
Department of Mathematical Sciences\endgraf 
University of Cape Town\endgraf 
Private Bag X1, 7701, Rondebosch\endgraf 
Cape Town, South Africa}
\email{\texttt{yangabavuma@gmail.com}}

\author[F.G.Russo]{Francesco G. Russo}
\address{Francesco G. Russo\endgraf
Department of Mathematics and Applied Mathematics\endgraf
University of Cape Town \endgraf
Private Bag X1, 7701, Rondebosch\endgraf 
Cape Town, South Africa\endgraf
and \endgraf
Department of Mathematics and Applied Mathematics\endgraf
University of the Western Cape\endgraf
New CAMS Building, Private Bag X17, 7535\endgraf 
Bellville, South Africa} 
\email{\texttt{francescog.russo@yahoo.com}}


\begin{abstract} The group of matrices $P_1$ of Pauli is a finite 2-group of order 16 and plays a fundamental role in quantum information theory, since it is related to the quantum information on the 1-qubit. Here we show that both $P_1$ and the Pauli 2-group $P_2$ of order 64 on 2-qubits, other than in quantum computing, can also appear in dynamical systems which are described by non self-adjoint Hamiltonians. This will allow us to represent $P_1$ and $P_2$ in terms of pseudofermionic operators.
\end{abstract}

\keywords{Pauli group; PT symmetries; LC circuits; fermionic operators; Hilbert spaces.\endgraf
\textit{Mathematics Subject Classification 2020:} Primary 81R05, 22E10; Secondary 22E70, 81R30, 81Q12.}

\maketitle

\section{Introduction and statement of the results}


Noether's Theorem \cite[Equations (13.148) and (13.158)]{goldstein} is a powerful tool in order to investigate the governing equations in dynamical systems possessing symmetries. The role of groups of symmetries has been largely investigated in connection with  Noether's Theorem and its applications in mathematical physics are numerous. 

In quantum information theory, one is often interested to study quantum codes and, in quantum computing one is interested in preventing or correcting errors that may affect dynamical systems involving qubits. This motivates several authors to work with the Pauli group $P_n$ on $n$ qubits. In particular $P_1$ denotes the Pauli group on 1 qubit
\begin{equation}\label{paulimatrices}
I =\left(\begin{array}{cc} 1 & 0 \\0	 & 1\end{array}\right), \ \ 
X =\left(\begin{array}{cc} 0 & 1 \\1	 & 0\end{array}\right),  \ \
Y=\left(\begin{array}{cc} 0 & -i \\i	 & 0\end{array}\right),  \ \ 
Z=\left(\begin{array}{cc} 1 & 0 \\0	 & -1\end{array}\right),
\end{equation}  
which satisfy the well known  identities
\begin{equation}\label{paulirules}
X=iZY, \ \  Y=iXZ,  \ \ Z=iYX,    \ \  X^2=Y^2=Z^2=I.
\end{equation}

These are the well known Pauli matrices \cite{pauli}, which form the nonabelian  group of order $16$ 
\begin{equation}\label{p1} 
P_1 = \{\pm I, \pm i I, \pm X, \pm iX, \pm Y, \pm iY, \pm Z, \pm iZ\}
\end{equation}
When we deal with quantum information systems, based on more than 1-qubit, one generalizes the notion of Pauli group, considering larger $2$-groups $P_n$ of order $4^{n+1}$ for $n$-qubits with $n \ge 1$ big enough. These groups, known as large Pauli groups, are prominent in the literature of dynamical systems of the last years \cite{kibler, NC}. Here we focus on the case $n=2$, that is, on the Pauli group on 2-qubits, defined by
\begin{equation}\label{p2} 
P_2 = \{M \otimes N \mid M,N \in P_1\},
\end{equation}
where the symbol $\otimes$ denotes the  Kronecker product among matrices. We will realize \eqref{p2} via a new family of examples which come from the area of the electromagnetism, and specifically from the LC quantum circuits.

In this spirit we are going to illustrate the notion of pseudofermionic ladder operators, according to \cite{baginbagbook}; in fact we may reconstruct Hamiltonians from ladder operators as originally indicated by
Dirac and Dieudonn\'e \cite{dieud}. Among other things, we  mention some recent dynamical systems, whose Hamiltonian is  obtained with a construction of the same type, see \cite{bagbavrus, bavuma2021}. 

 We shall mention that \cite{rame1} presents the construction of  a pair of coupled $LC$ electronic circuits, one with amplification and the other with an equivalent attenuation. The interest behind these circuits, besides of the fact that they produce an interesting set of matrices in the analysis of (\ref{p2}), is that they can be used to construct an experimental setting which is connected with a PT symmetric system, i.e. with a quantum system whose Hamiltonian satisfies certain transformation rules with respect to the parity ($P$) and the time reversal ($T$) operators, \cite{ben}. These systems are becoming more and more studied, both by physicists, and by mathematicians, for their very peculiar properties, \cite{ben, specissue2021}. In fact there exists a dynamical system $\mathcal{S}$ in \cite{rame1},  realized by coupled LC electronic circuits, possessing PT symmetries and the dynamics of $\mathcal{S}$ is governed by the differential equation \begin{equation}\label{equationofthemodel}i\frac{\mathrm{d} \Psi(t)}{\mathrm{d} t} = H_\mathcal{S} \Psi (t),\end{equation}where $H_\mathcal{S} = i L_\mathcal{S}$ is the (formal) Hamiltonian of $\mathcal{S}$, the symbols $\alpha, \mu, \gamma$ are parameters defining the circuits, \begin{equation}\label{16}L_\mathcal{S}=\left(\begin{array}{cccc}	0 & 0 & 1 & 0 \\	0 & 0 & 0 & 1 \\	-\alpha & \mu\alpha & \gamma & 0 \\	\mu\alpha &-\alpha & 0 & -\gamma \\\end{array}\right)  \ \qquad \mbox{and} \ \qquad
\Psi(t)=\left(\begin{array}{c}	Q_1(t)  \\	Q_2(t)  \\	\dot Q_1(t)  \\	\dot Q_2(t)  \\\end{array}\right)
\end{equation}is the vector of  charges $Q_1(t)$ and $Q_2(t)$ on the capacitors of the two circuits with corresponding derivatives $\dot Q_1(t)$ and $ \dot Q_2(t)$. It might be useful to stress that (\ref{equationofthemodel}) looks like a Schr\"odinger equation, but this is just a formal identification. In fact, the four components of $\Psi(t)$ are not all independent, as they should.

Our main result is listed below.

\begin{theorem}\label{main1}The dynamical system $\mathcal{S}$ above, realized by coupled LC electronic circuits,  satisfies  the following properties:
\begin{itemize}
\item [(i).] $H_\mathcal{S}$ can be decomposed in the linear combination of 12 bounded operators $X_k$ for $k=1, 2, \ldots, 12$.  Moreover, the operators which commute with all the $X_k$'s are only  those of diagonal type;
\item [(ii).] The   operators  $X_k$  can be identified with matrices in $\mathrm{SL}_4(\mathbb{C})$  and  it is possible to select six operators among the $X_k$ which are generators of the Pauli group $P_2$ on 2-qubits and are simultenously realized by pseudofermionic operators;
\item[(iii).] There exist two subgroups $U$ and $V$ of $P_2$, generated by sets of pseudofermionic operators $\Gamma_\mu$ and $\Gamma_\nu$ respectively, such that $P_2=UV$ and the derived subgroup $[U,V]$ is trivial.
\end{itemize}
\end{theorem}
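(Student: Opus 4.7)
The plan is to address the three parts in sequence, exploiting the tensor-product structure of the $4\times 4$ matrix $L_{\mathcal{S}}$. For part (i), I would first write $L_{\mathcal{S}}$ in block form,
\[
L_{\mathcal{S}} = \begin{pmatrix} 0 & I_2 \\ -\alpha I_2 + \mu\alpha X & \gamma Z \end{pmatrix},
\]
and then expand the block-indexing matrix units $E_{12}, E_{21}, E_{22}$ in the Pauli basis. This produces a compact expansion of $H_{\mathcal{S}} = iL_{\mathcal{S}}$ in tensor products $\sigma_i\otimes\sigma_j$ of Pauli matrices, with coefficients built from the independent physical parameters $\alpha$, $\mu\alpha$, $\gamma$ and their real and imaginary parts. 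Splitting these contributions yields the twelve bounded operators $X_1,\ldots,X_{12}$ whose linear combination equals $H_{\mathcal{S}}$. The commutant claim then reduces to showing that the collection $\{X_k\}$ contains enough off-diagonal Pauli tensor products (equivalently, enough off-diagonal matrix units) to force any $C\in M_4(\mathbb{C})$ commuting with every $X_k$ to kill each off-diagonal entry; a short case analysis via the standard commutator identities among Pauli matrices then yields that $C$ is diagonal.

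For part (ii), every $\sigma_i\otimes\sigma_j$ has determinant one, so after absorbing phases the $X_k$'s sit inside $\mathrm{SL}_4(\mathbb{C})$. The Pauli group $P_2$ is generated by the six tensor products
\[
X\otimes I,\ Y\otimes I,\ Z\otimes I,\ I\otimes X,\ I\otimes Y,\ I\otimes Z,
\]
since the central scalar $iI\otimes I$ is recovered from \eqref{paulirules} (writing $i I = Z Y X \cdot (-1)$ up to sign on each tensor slot). I would therefore isolate six $X_k$'s (possibly after a scalar rescaling) coinciding with this generating set. The pseudofermionic realisation follows from the standard identifications $X = a + b$, $Y = i(b-a)$, $Z = I - 2 b a$, where $(a,b)$ is a pseudofermionic pair in the sense of \cite{baginbagbook}; applying them on each tensor factor with two mutually anticommuting pseudofermionic pairs $(a_1,b_1)$ and $(a_2,b_2)$ expresses the six generators of $P_2$ in pseudofermionic form.

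For part (iii), set $U = \{M\otimes I : M\in P_1\}$ and $V = \{I\otimes N : N\in P_1\}$. Each is a subgroup of $P_2$ isomorphic to $P_1$. The factorisation $M\otimes N = (M\otimes I)(I\otimes N)$ shows $P_2 = UV$, while the identity $(M\otimes I)(I\otimes N) = (I\otimes N)(M\otimes I)$ gives $[U,V] = \{1\}$ at once. The generating sets $\Gamma_\mu$ for $U$ and $\Gamma_\nu$ for $V$ are then inherited from part (ii): $\Gamma_\mu$ consists of the pseudofermionic operators realising $X\otimes I, Y\otimes I, Z\otimes I$, and $\Gamma_\nu$ of those realising $I\otimes X, I\otimes Y, I\otimes Z$. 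I expect the main obstacle to be part (i), where the bookkeeping required to identify precisely twelve operators in a form that makes the joint commutant transparent is the delicate step; once this is in place, parts (ii) and (iii) follow from the well-known generating structure of $P_2$ and the standard Pauli-to-pseudofermion dictionary.
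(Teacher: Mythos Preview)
Your outline matches the paper's argument closely: it too expands $L_{\mathcal{S}}$ in Pauli tensor products, selects $X_1,\ldots,X_6$ as $X\otimes I$, $Y\otimes I$, $Z\otimes I$, $I\otimes Z$, $I\otimes X$, $i(I\otimes Y)$, realises each tensor factor via a pseudofermionic pair (concretely, the operators $(\mathbf{a},\mathbf{b})$ of the two-level-atom model at the special parameters $\theta=\pi/2$, $\delta=0$, rather than the abstract dictionary you cite), and obtains the central-product decomposition $P_2=UV$ exactly from the factorisation $M\otimes N=(M\otimes I)(I\otimes N)$, as you propose.

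Two small slips are worth correcting. First, the Pauli expansion of $L_{\mathcal{S}}$ produces only \emph{six} nonzero terms, not twelve; the paper does not derive twelve operators from $H_{\mathcal{S}}$ but simply \emph{posits} a fixed list $X_1,\ldots,X_{12}$ of Pauli tensor products (chosen in part so that a second Hamiltonian $H_{\mathcal{T}}$ in Section~5 can also be decomposed) and then checks that $L_{\mathcal{S}}$ is a combination of them with six coefficients vanishing---so your phrase ``splitting these contributions yields the twelve bounded operators'' overstates what the expansion gives. Second, the two pseudofermionic pairs the paper actually uses are $\mathbf{A}=\mathbf{a}\otimes\mathbb{I}$, $\mathbf{B}=\mathbf{b}\otimes\mathbb{I}$ and $\widetilde{\mathbf{A}}=\mathbb{I}\otimes\mathbf{a}$, $\widetilde{\mathbf{B}}=\mathbb{I}\otimes\mathbf{b}$, which \emph{commute} with one another; your ``mutually anticommuting'' is incorrect here (anticommutation would require a Jordan--Wigner string, which is neither needed nor present), though it does not affect the rest of your argument.
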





While Section 2 discusses some general parts of the theory of pseudohermitian operators in dynamical systems and their relevance in Quantum Mechanics, Section 3 focuses specifically on the theory of pseudofermionic operators and on their recent developments. The main proofs can be found in Section 4, while Section 5 describes further investigations which might be  possible for larger Pauli groups.


\section{Some folklore on pseudohermitian physical systems}

Self-adjoint and non-self-adjoint operators are adequate to describe many properties that are useful for physical systems in quantum mechanics, see \cite{goldstein}. Note also that the eigenvalues of  operators $\textbf{a}$, which are significant in many quantum mathematical models, turn out  to be real;   $\langle \psi, \textbf{a} \psi \rangle$ is the expectation value for the measurements of the  observable $\textbf{a}$ in the  quantum state $\psi$, and its eigenvalue $\lambda$ represents one of the possible values for this measurement. 

Wave functions can evolve in time. In fact one of the main  assumptions in quantum mechanics is that there exists an operator $\textbf{H}$ on the Hilbert space $\mathcal{H}$, called the \textit{Hamiltonian operator} for the system, producing  the well known  Schr\"odinger Equation:

\begin{proposition}[See \cite{bchall}, Axiom 5, Claim 3.17]
The \index{time-evolution} time-evolution of the wave function $\psi$ in a quantum system is governed by the  Schr\"{o}dinger Equation,
\begin{equation} \label{schrodeq}
    \frac{d \psi}{dt}=\frac{1}{i \hbar} \textbf{H} \psi,
\end{equation}
where $\textbf{H}$ is the Hamiltonian and $\hbar$ is the  constant of Planck. In particular, if $\textbf{H}$ is  time independent, then \eqref{schrodeq} can be formally solved by setting 
\begin{equation}
    \psi(t)=e^{-i t \textbf{H}/\hbar} \psi_0.
\end{equation}
where $\psi_0$ is the initial condition.
\end{proposition}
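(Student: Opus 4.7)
The plan is to split the proposition into its two assertions and handle them separately. The first assertion---that the time evolution of $\psi$ in a quantum system is governed by \eqref{schrodeq}---is not a theorem deducible from prior mathematical definitions but a fundamental postulate of quantum mechanics; accordingly, my approach is simply to invoke it as Axiom 5 of \cite{bchall}, accompanied by a one-sentence justification that it encodes the experimentally observed unitarity and reversibility of closed-system evolution, with $\textbf{H}$ selfadjoint on $\mathcal{H}$ and $\hbar$ fixing units. No calculation is needed at this step, and this is precisely why the result is stated as a Proposition citing an axiomatic source rather than being proved from scratch.

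For the second assertion---that when $\textbf{H}$ is time independent one may write $\psi(t)=e^{-it\textbf{H}/\hbar}\psi_0$---the strategy is a direct verification through the functional calculus. First I would define the family of operators
\begin{equation*}
U(t):=e^{-it\textbf{H}/\hbar}, \qquad t\in\mathbb{R},
\end{equation*}
via the norm-convergent series $\sum_{n\ge 0}(-it/\hbar)^n\textbf{H}^n/n!$ in the bounded case, or via the spectral theorem $U(t)=\int_{\mathbb{R}}e^{-it\lambda/\hbar}\,dE_\lambda$ (with $dE_\lambda$ the spectral measure of $\textbf{H}$) in the selfadjoint unbounded case. Then I would check two properties: (a) $U(0)=\id_{\mathcal{H}}$, hence $\psi(0)=\psi_0$, and (b) $\tfrac{d}{dt}U(t)\psi_0=-\tfrac{i}{\hbar}\textbf{H}\,U(t)\psi_0$, which follows by termwise differentiation in the bounded case and by Stone's theorem on one-parameter unitary groups in the unbounded selfadjoint case. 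Setting $\psi(t):=U(t)\psi_0$ then yields a solution of \eqref{schrodeq} with the prescribed initial datum.

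The main obstacle I anticipate is not computational but \emph{domain-theoretic}: when $\textbf{H}$ is unbounded, the identity in (b) holds only for $\psi_0\in D(\textbf{H})$, and the qualifier ``formally'' in the statement is doing real work to cover this subtlety. Moreover, Section~2 is setting the stage for the non-selfadjoint (pseudohermitian, and eventually pseudofermionic) Hamiltonians that dominate the later sections of the paper; in that generality Stone's theorem no longer applies directly, and one must view $\textbf{H}$ as the generator of a $C_0$-semigroup via the Hille--Yosida theorem in order to give $U(t)$ a rigorous meaning. I would flag this delicacy at the end of the proof, both to keep the exposition honest and to motivate why the pseudohermitian framework developed in the remainder of Section~2 is necessary before one can safely manipulate expressions such as $e^{-it\textbf{H}/\hbar}$ in the setting of the LC-circuit Hamiltonian \eqref{16}.
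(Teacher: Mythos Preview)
Your proposal is correct in substance, but you are doing far more work than the paper does. In the paper this proposition carries no proof at all: it is stated with the citation ``See \cite{bchall}, Axiom 5, Claim 3.17'' and then the text immediately moves on to Heisenberg's picture. The first assertion is taken as an axiom and the second as a claim, both imported wholesale from Hall's book; there is no verification via power series, spectral theorem, Stone's theorem, or Hille--Yosida in the paper itself.

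So the difference is not one of strategy but of scope: the paper treats the proposition as folklore background (the section is literally titled ``Some folklore on pseudohermitian physical systems''), while you supply an actual argument. Your discussion of domain issues and the role of the word ``formally'' is accurate and more careful than anything the paper says here; your closing remark linking the non-selfadjoint case to the later LC-circuit Hamiltonian is also apt, though again the paper does not make this connection explicit at this point. If you are matching the paper, simply state the proposition with the citation and omit the proof; if you want to include your verification, it would be an addition beyond what the paper provides.
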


Heisenberg had a different view of the dynamics of a quantum system. He thought of the operators \index{observable} (quantum observable) as evolving in time instead of the \index{quantum state} quantum states (vectors in the Hilbert space). In his interpretation each operator $\textbf{a}$ evolves in time according to the operator-valued differential equation
\begin{equation}
    \frac{d \textbf{a}(t)}{dt}=\frac{1}{i \hbar} \left[ \textbf{a}(t), \textbf{H} \right],
\end{equation}
where $\textbf{H}$ is the Hamiltonian operator of the system (time-independent, here, as in the previous proposition), and where  \begin{equation}\label{heisenbergcommutator}\left[ \textbf{a},\textbf{b}\right]=\textbf{a} \textbf{b}-\textbf{b} \textbf{a}\end{equation}
 is \textit{the commutator between the operators } $\textbf{a}$ and $\textbf{b}$   according to \cite[ Definition 3.20]{bchall}. Note that since $\textbf{H}$ commutes with itself, the operator $\textbf{H}$ remains constant in time, even in this interpretation. However this turns out to have the same physical meaning in  Schr\"{o}dinger's interpretation.

An increasing number of physicists and mathematicians started to be recently interested to situations in which the Hamiltonian of a mathematical model under consideration is not necessarily  self-adjoint. Examples of a non-self-adjoint Hamiltonian operators  have been studied by Bender \cite{ben,bender3,bender2,bender1}), Mostafazadeh  \cite{mostaf2,mostaf1} and other authors \cite{bagrus1, bagrus2, bagrus3}, introducing, for instance, generalized versions of the  harmonic oscillator. 

\section{Elementary theory of pseudofermionic and pseudobosonic operators}
In particle physics, elementary particles and composite particles are essentially broken up into two (non-overlapping) classes:  \textit{bosons} and \textit{ fermions}. Roughly speaking  fermions are particles that are associated with the matter and  \textit{pseudofermions} are generalizations of  fermions. The bosons are usually used to represent the interactions (in terms of fields),  while \textit{pseudobosons} generalize bosons. In this paper, we will only focus on pseudofermions. The reader can find more information the theory of the pseudobosons and of the pseudofermions  in \cite{baginbagbook, bag2022book}.

Typically on a two dimensional  Hilbert space $\mathcal{H}=\mathbb{C}^2$  we can define  \textit{lowering} and  \textit{raising operators}, which will lower or raise the eigenvalues (respectively) associated with an eigenstate by acting on the state itself. These come as a pair, say $\textbf{c}$ and $\textbf{c}^*$ respectively. The lowering operator, $\textbf{c}$ lowers the eigenvalue of a given state by acting on it, and the raising operator $\textbf{c}^*$ raises the eigenvalue of a given state by acting on it, and it is the adjoint of the lowering operator. 

\begin{remark} The  \textit{fermionic operators} satisfy the \textit{ canonical anticommutation relations} (CAR) \begin{equation}\label{ccr}\{\textbf{c},\textbf{c}^*\}:=\textbf{c}\textbf{c}^*+\textbf{c}^*\textbf{c}=\mathbb{I}, \ \ \{\textbf{c},\textbf{c}\}=\{\textbf{c}^*,\textbf{c}^*\}=0.\end{equation} However  \textit{pseudofermionic operators} satisfy more general anticommutation  rules:
\begin{equation} \label{pseudofermions}
    \{\textbf{a},\textbf{b}\}=\mathbb{I}, \ \ \{\textbf{a},\textbf{a}\}=\{\textbf{b},\textbf{b}\}=0,
\end{equation}
where $\textbf{b} \neq \textbf{a}^*$ a priori. Note that \eqref{ccr} is motivated by Pauli's Principle.
\end{remark}

Note that  fermionic operators are bounded. A nonzero vector $\varphi_0 \in \mathcal{H}$ such that $\textbf{a}\varphi_0=0$ surely exists, as well as a nonzero vector $\Psi_0 \in \mathcal{H}$ such that $\textbf{b}^*\Psi_0=0$. This is because $\textbf{a}$ and $\textbf{b}^*$ have nontrivial  kernels. We are going to illustrate better this aspect, providing a {\em minimal} framework of functional analysis in the remaining part of the present section.

\begin{definition}A set $\mathcal{E}=\{e_n \in \mathcal{H}, n \geq 0\}$ is a Schauder basis for $\mathcal{H}$ if any vector $f \in \mathcal{H}$ can be written uniquely as  \[f= \sum\limits^\infty_{n=0} c_n (f) e_n,\] that is, as   linear combination of $e_n$ (eventually  infinite) with $c_n (f) \in \mathbb{C}$ depending only on $f$.\end{definition}

A particular  type of bases is given by the \textit{orthonormal bases}, that is, Schauder bases such that $\langle e_n,e_m \rangle = \delta_{n,m}$, where $\delta_{n,m}$ denotes the Kronecker delta and $n,m$ are positive integers.  In this particular case, we recover $c_n (f)= \langle e_n, f \rangle$ via the scalar product. Note  that  the scalar product is  linear in its second variable. A more specific version of  orthonormal bases is  given by the  \textit{Riesz bases}:

\begin{definition}\label{rieszbases} In $\mathcal{H}$ the set $\mathcal{F}=\{f_n \in \mathcal{H}, n \geq 0\}$ is a Riesz basis if there exists a  bounded operator $T$ on $\mathcal{H}$  with bounded inverse $T^{-1}$ and an orthonormal basis $\mathcal{E}=\{e_n \in \mathcal{H}, n \geq 0\}$ such that $f_n=T e_n$, for all $n \geq 0$. \end{definition}

A detailed analysis on operators defined by Riesz bases can be found in \cite{bagino,christ1}, but it is clear that $\langle f_n, f_m \rangle \neq \delta_{n,m}$ in the context of Definition \ref{rieszbases}. In this case, however, one can introduce a second set $\mathcal{G}$, which is another Riesz basis biorthonormal to $\mathcal{F}$, whose vectors are simply $g_n=(T^{-1})^*e_n$. We go ahead to illustrate some results that are crucial in the present framework. 

First we define the following vectors
\begin{equation}\label{fa1}
    \varphi_1 := \textbf{b}\varphi_0, \ \ \Psi_1 := \textbf{a}^* \Psi_0,
\end{equation}
as well as the following \index{operator!non-self-adjoint} non-self-adjoint operators
\begin{equation} \label{fa2}
    \textbf{N} := \textbf{b}\textbf{a}, \ \ \textbf{N}^*=\textbf{a}^* \textbf{b}^*.
\end{equation}
Note that for $n \geq 2$ the vectors $\textbf{b}^n \varphi_0$ and ${\textbf{a}^*}^n \Psi_0$ are automatically equal to zero. 
Then the following equations are satisfied:
\begin{equation}\label{fa3}
    \textbf{a}\varphi_1=\varphi_0, \ \ \textbf{b}^*\Psi_1=\Psi_0
\end{equation}
\begin{equation}\label{fa4}
    \textbf{N}\varphi_n=n\varphi_n, \ \ \textbf{N}^*\Psi_n=n\Psi_n, \text{ for $n=0,1$}.
\end{equation}
If the normalizations of $\varphi_0$ and $\Psi_0$ are chosen such that $\langle \varphi_0, \Psi_0 \rangle=1$, then we have also
\begin{equation}\label{fa5}
    \langle \varphi_k, \Psi_n \rangle=\delta_{k,n} \text{ for $k,n=0,1$}.
\end{equation}
Now we introduce the \index{operator!self-adjoint} self-adjoint operators $\textbf{S}_\varphi$ and $\textbf{S}_\Psi$ via their action on a generic $f \in \mathcal{H}$:
\begin{equation}\label{fa6}
	\textbf{S}_\varphi f= \sum\limits^1_{n=0} \langle \varphi_n, f \rangle \varphi_n, \ \ \textbf{S}_\Psi f= \sum\limits^1_{n=0} \langle \Psi_n, f \rangle \Psi_n.
\end{equation}
Note also that the operators $\textbf{S}_\varphi$ and $\textbf{S}_\Psi$ are bounded, strictly positive, self-adjoint, and invertible. They satisfy
\begin{equation}\label{fa7}
    \|\textbf{S}_\varphi\| \leq \| \varphi_0\|^2+\|\varphi_1\|^2, \ \ \|\textbf{S}_\Psi\| \leq \| \Psi_0\|^2+\|\Psi_1\|^2,
\end{equation}
\begin{equation}\label{fa8}
    \textbf{S}_\varphi \Psi_n=\varphi_n, \ \ \textbf{S}_\Psi\varphi_n=\Psi_n,
\end{equation}
for $n=0,1$, as well as $\textbf{S}_\varphi=\textbf{S}_\Psi^{-1}$ and the following intertwining relations
\begin{equation}\label{fa9}
    \textbf{S}_\Psi \textbf{N}=\textbf{N}^*\textbf{S}_\Psi, \ \ \textbf{S}_\varphi \textbf{N}^*=\textbf{N}\textbf{S}_\varphi.
\end{equation}
The vectors of  $\mathcal{F}_\varphi=\{\varphi_0, \varphi_1\}$ and $\mathcal{F}_\Psi=\{\Psi_0, \Psi_1\}$ are  biorthogonal,  linearly independent in a two dimensional complex Hilbert space so that $\mathcal{F}_\varphi$ is a Schauder basis for $\mathcal{H}$. The same argument applies to $\mathcal{F}_\Psi$ and in fact it turns out that  both these sets are  Riesz bases for $\mathcal{H}$.


A  connection between pseudofermions and  fermions is reported below:

\begin{proposition}[See \cite{baginbagbook}, Theorem 3.5.1]\label{similarity} Let $\textbf{c}$ and $\textbf{T}=\textbf{T}^{*}$ be two operators on $\mathcal{H}$ such that \eqref{ccr} are satisfied and in addition $\textbf{T}$ is positive. Then the operators 
$\textbf{a}=\textbf{TcT}^{-1}$ and $\textbf{b}=\textbf{Tc}^{*}\textbf{T}^{-1}$
satisfy \eqref{pseudofermions}.
Viceversa given two operators $\textbf{a}$ and $\textbf{b}$ acting on $\mathcal{H}$, satisfying \eqref{pseudofermions}, it is possible to construct two operators $\textbf{c}$ and $\textbf{T}$ with the above properties.
\end{proposition}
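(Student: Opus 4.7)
I would split the argument along its two directions. For the forward implication, the computation is essentially a one-liner: conjugation by an invertible operator preserves anticommutators, so
\begin{equation*}
\mathbf{a}\mathbf{b}+\mathbf{b}\mathbf{a}=\mathbf{T}(\mathbf{c}\mathbf{c}^{*}+\mathbf{c}^{*}\mathbf{c})\mathbf{T}^{-1}=\mathbf{T}\,\mathbb{I}\,\mathbf{T}^{-1}=\mathbb{I},
\end{equation*}
while $\mathbf{a}^{2}=\mathbf{T}\mathbf{c}^{2}\mathbf{T}^{-1}=0$ and $\mathbf{b}^{2}=\mathbf{T}(\mathbf{c}^{*})^{2}\mathbf{T}^{-1}=0$ by \eqref{ccr}, so \eqref{pseudofermions} holds for the pair $(\mathbf{a},\mathbf{b})$.

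For the converse, the candidate for $\mathbf{T}$ is already hinted at by the biorthonormal-basis formalism developed above. Since $\mathbf{S}_{\varphi}$ of \eqref{fa6} is bounded, strictly positive, self-adjoint and invertible, the continuous functional calculus furnishes a bounded positive self-adjoint square root $\mathbf{T}:=\mathbf{S}_{\varphi}^{1/2}$, whose inverse $\mathbf{T}^{-1}=\mathbf{S}_{\Psi}^{1/2}$ is again bounded. I would then simply \emph{define} $\mathbf{c}:=\mathbf{T}^{-1}\mathbf{a}\mathbf{T}$, which gives $\mathbf{a}=\mathbf{T}\mathbf{c}\mathbf{T}^{-1}$ for free.

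The substantive step is to check that the \emph{same} $\mathbf{T}$ also intertwines $\mathbf{c}^{*}$ with $\mathbf{b}$. Because $\mathbf{T}^{*}=\mathbf{T}$, one has $\mathbf{c}^{*}=\mathbf{T}\mathbf{a}^{*}\mathbf{T}^{-1}$, so the required equality $\mathbf{T}\mathbf{c}^{*}\mathbf{T}^{-1}=\mathbf{b}$ collapses to the intertwining relation $\mathbf{b}\,\mathbf{S}_{\varphi}=\mathbf{S}_{\varphi}\mathbf{a}^{*}$, a finer version of those recorded in \eqref{fa9}. I would establish it by direct evaluation on the two-element basis $\mathcal{F}_{\varphi}=\{\varphi_{0},\varphi_{1}\}$: combining \eqref{fa1}, \eqref{fa3} and \eqref{fa8}, one verifies $\mathbf{b}\,\mathbf{S}_{\varphi}\varphi_{n}=\mathbf{S}_{\varphi}\mathbf{a}^{*}\varphi_{n}$ for $n=0,1$, and linearity then extends the identity to $\mathcal{H}=\mathbb{C}^{2}$. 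Once this is in hand, one obtains the useful parallel formula $\mathbf{c}^{*}=\mathbf{T}^{-1}\mathbf{b}\mathbf{T}$.

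Finally, CAR for $\mathbf{c}$ follows by the mirror image of the forward computation, using \eqref{pseudofermions} on the right-hand side:
\begin{equation*}
\{\mathbf{c},\mathbf{c}^{*}\}=\mathbf{T}^{-1}\{\mathbf{a},\mathbf{b}\}\mathbf{T}=\mathbb{I},\qquad \mathbf{c}^{2}=\mathbf{T}^{-1}\mathbf{a}^{2}\mathbf{T}=0,\qquad (\mathbf{c}^{*})^{2}=\mathbf{T}^{-1}\mathbf{b}^{2}\mathbf{T}=0.
\end{equation*}
The main obstacle I anticipate is not the similarity bookkeeping but isolating the correct intertwiner: \eqref{fa9} gives relations only at the level of $\mathbf{N}=\mathbf{b}\mathbf{a}$, whereas we need their ``square root'' version $\mathbf{b}\,\mathbf{S}_{\varphi}=\mathbf{S}_{\varphi}\mathbf{a}^{*}$ at the level of the individual ladder operators. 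Once that identity is pinned down on the biorthonormal basis, taking the positive self-adjoint square root $\mathbf{T}=\mathbf{S}_{\varphi}^{1/2}$ delivers the desired $\mathbf{T}$ and $\mathbf{c}$, and the equivalence between \eqref{ccr} and \eqref{pseudofermions} up to the similarity $\mathbf{T}$ is complete.
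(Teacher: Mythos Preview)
The paper does not actually prove this proposition: it is imported wholesale from \cite{baginbagbook}, Theorem~3.5.1, with no argument supplied in the text, so there is no in-paper proof to compare against. Your approach is the standard one and is essentially what appears in the cited reference: the forward direction is the trivial conjugation identity, and for the converse one takes $\mathbf{T}=\mathbf{S}_{\varphi}^{1/2}$ and $\mathbf{c}=\mathbf{T}^{-1}\mathbf{a}\mathbf{T}$, reducing everything to the ladder-level intertwining relation $\mathbf{b}\,\mathbf{S}_{\varphi}=\mathbf{S}_{\varphi}\mathbf{a}^{*}$.

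One small correction to your write-up: the verification of $\mathbf{b}\,\mathbf{S}_{\varphi}=\mathbf{S}_{\varphi}\mathbf{a}^{*}$ should be carried out on $\mathcal{F}_{\Psi}=\{\Psi_{0},\Psi_{1}\}$, not on $\mathcal{F}_{\varphi}$. The equations you cite, \eqref{fa1}, \eqref{fa3}, \eqref{fa8}, give no direct control over $\mathbf{S}_{\varphi}\varphi_{n}$ or $\mathbf{a}^{*}\varphi_{n}$, whereas on $\Psi_{n}$ everything is immediate: $\mathbf{b}\,\mathbf{S}_{\varphi}\Psi_{0}=\mathbf{b}\varphi_{0}=\varphi_{1}=\mathbf{S}_{\varphi}\Psi_{1}=\mathbf{S}_{\varphi}\mathbf{a}^{*}\Psi_{0}$, and $\mathbf{b}\,\mathbf{S}_{\varphi}\Psi_{1}=\mathbf{b}\varphi_{1}=\mathbf{b}^{2}\varphi_{0}=0=\mathbf{S}_{\varphi}(\mathbf{a}^{*})^{2}\Psi_{0}=\mathbf{S}_{\varphi}\mathbf{a}^{*}\Psi_{1}$. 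This is a cosmetic fix; the argument is otherwise complete.
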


Examples of  pseudofermions (and generalizations) can be found in  \cite{bgsa,bagrus1,bagrus2,bagrus3,baginbagbook,bagpan}.

 In  particular \cite[Chapter 3.5.1]{baginbagbook} shows an effective  non-selfadjoint hamiltonian of a mathematical model of a two-level atom which  interacts with an electromagnetic field. Let's recall a few facts from \cite[Chapter 3.5.1]{baginbagbook} with more details; it will be useful in the proof of Theorem \ref{main1}. Maamache and others \cite{tripf} described an effective non self-adjoint Hamiltonian involved in a two levels atom interacting with an electromagnetic field. This  followed an intuition of Mostafazadeh in \cite{mostaf2, mostaf1}. Writing \eqref{schrodeq}, one gets \begin{equation}\label{specialschrodeq} i\dot\Phi(t)=H_{eff}\Phi(t), \qquad H_{eff}=\frac{1}{2}\left(
                                                       \begin{array}{cc}
                                                         -i\delta & \overline{\omega} \\
                                                         \omega & i\delta \\
                                                       \end{array}
                                                     \right),
\end{equation} 
where $\delta$ is a real quantity, related to the decay rates for the two levels, while the complex parameter $\omega$ describes the interaction due to radiations of the atom. In particular we may write \begin{equation}\label{omega}\omega=|\omega|e^{i\theta}
\end{equation}
 Of course, the effective Hamiltonian cannot be self-adjoint in the present situation, but let's see better some details in connection with the framework of functional analysis which we have seen in \eqref{fa1}-\eqref{fa9}. Now introduce the operators
\begin{equation}\label{firstoperator}
	\textbf{a}=\frac{1}{2\Omega}\left(
	\begin{array}{cc}
		-|\omega| & -e^{-i\theta}(\Omega+i\delta) \\
		e^{i\theta}(\Omega-i\delta) & |\omega| \\
	\end{array}
	\right), \quad
	\textbf{b}=\frac{1}{2\Omega}\left(
	\begin{array}{cc}
		-|\omega| & e^{-i\theta}(\Omega-i\delta) \\
		-e^{i\theta}(\Omega+i\delta) & |\omega| \\
	\end{array}
	\right),
\end{equation}
where $\Omega=\sqrt{|\omega|^2-\delta^2}$ can be assumed to be real and strictly positive. One can check easily that \eqref{firstoperator} are pseudofermionic operators, that is, they satisfy \eqref{pseudofermions} and that
\begin{equation}\label{effectivehamiltonian}H_{eff}=\Omega\left(\textbf{b}\textbf{a}-\frac{1}{2}\mathbb{I}\right).
\end{equation}

In order to visualize the notions in \eqref{fa1}--\eqref{fa9}, we may consider 
\begin{equation}\label{fa10}
\varphi_0=k\left(
             \begin{array}{c}
               1 \\
               -\,\frac{e^{i\theta}(\Omega-i\delta)}{|\omega|} \\
             \end{array}
           \right),\qquad
\Psi_0=k'\left(
             \begin{array}{c}
               1 \\
               -\,\frac{e^{i\theta}(\Omega+i\delta)}{|\omega|} \\
             \end{array}
           \right),
\end{equation}
where $k$ and $k'$ are normalization constants such that  the orthogonality condition below is satisfied
\begin{equation}\label{fa11}\left<\varphi_0,\Psi_0\right>=\overline{k}\,k'\left(1+\frac{1}{|\omega|^2}(\Omega+i\delta)^2\right)=1.
\end{equation} Then we introduce
\begin{equation}\label{fa12}
\varphi_1=\textbf{b}\varphi_0=k\left(
             \begin{array}{c}
               \frac{i\delta-\Omega}{|\omega|} \\
               -e^{i\theta} \\
             \end{array}
           \right),\qquad
\Psi_1=\textbf{a}^\dagger\Psi_0=k'\left(
             \begin{array}{c}
               \frac{-i\delta-\Omega}{|\omega|} \\
               -e^{i\theta} \\
             \end{array}
           \right).
\end{equation}
In particular, we find that $\mathcal{F}_\varphi$ and $\mathcal{F}_\Psi$ are biorthonormal bases of $\mathcal{H}$, and that
\begin{equation}\label{fa13}
H_{eff}\varphi_0=-\,\frac{\Omega}{2}\,\varphi_0,\quad H_{eff}\varphi_1=\frac{\Omega}{2}\,\varphi_1, \quad
 H_{eff}^\dagger\Psi_0=-\,\frac{\Omega}{2}\,\Psi_0,\quad H_{eff}^\dagger\Psi_1=\frac{\Omega}{2}\,\Psi_1.
\end{equation}
It is evident that $H_{eff}$ and $H_{eff}^\dagger$ are not self-adjoint. In order to find \eqref{fa8} and \eqref{fa9}, now
\begin{equation}\label{fa14}
\mathbf{S}_\varphi=2|k|^2\left(
                  \begin{array}{cc}
                    1 & \frac{-i\delta}{|\omega|}\,e^{-i\theta} \\
                    \frac{i\delta}{|\omega|}\,e^{i\theta} & 1 \\
                  \end{array}
                \right),\quad
\mathbf{S}_\Psi=\frac{|\omega|^2}{2|k|^2\Omega^2}\left(
                  \begin{array}{cc}
                    1 & \frac{i\delta}{|\omega|}\,e^{-i\theta} \\
                    \frac{-i\delta}{|\omega|}\,e^{i\theta} & 1 \\
                  \end{array}
                \right)
\end{equation}
and  one is the inverse of the other. In the present situation Proposition \ref{similarity} may be applied to the operator $\mathbf{S}_\varphi^{\pm1/2}=\pm\frac{1}{2} \mathbf{S}_\varphi$ to define two {\em standard} fermionics operators  $\textbf{c}$ and $\textbf{c}^\dagger$ such that 
\begin{equation}\label{fa15}
\mathbf{N}_0=\textbf{c}^\dagger \textbf{c} \  \ \mbox{and}  \ \ H_{eff}=\mathbf{S}_\varphi^{1/2}   \ \left(\Omega \left(\mathbf{N}_0 - \frac{1}{2} \mathbb{I} \right)\right) \ \ \mathbf{S}_\varphi^{-1/2},
\end{equation}
but now we have clearly produced another Hamiltonian which is self-adjoint and similar to $H_{eff}$.

More recently, \cite[Theorem 1.2]{bagbavrus} describes the Pauli group \eqref{paulimatrices} in terms of \eqref{firstoperator}. In fact, it is possible to introduce the matrices 
\begin{equation}\label{muoperators}
	\mu_1=\frac{1}{\Omega}\left(
	\begin{array}{cc}
		-|\omega| & -i\delta e^{-i\theta} \\
		-i\delta e^{i\theta} & |\omega| \\
	\end{array}
	\right), \quad \mu_2=i\left(
	\begin{array}{cc}
		0 &  e^{-i\theta} \\
		e^{i\theta} & 0 \\
	\end{array}
	\right), \quad \mu_3=\frac{1}{\Omega}\left(
	\begin{array}{cc}
		i\delta & -|\omega| e^{-i\theta} \\
		-|\omega| e^{i\theta} & -i\delta \\
	\end{array}
	\right),
\end{equation}
where
\begin{equation} \label{mumatrices}
	\mu_1=\mu_1(\theta, \delta)=\textbf{b}+\textbf{a}, \qquad \mu_2=\mu_2(\theta, \delta)=i(\textbf{b}-\textbf{a}), \qquad \mu_3=\mu_3(\theta, \delta)=[\textbf{a},\textbf{b}]=\textbf{a}\textbf{b}-\textbf{b}\textbf{a}.
\end{equation} 
Of course, $\mu_1=\mu_1(\theta, \delta)$, that is, $\mu_1$ depends on $\theta$ and $\delta$, similarly this happens for $\mu_2$ and $\mu_3$, but also   for $\textbf{a}=\textbf{a}(\theta, \delta)$ and $\textbf{b}=\textbf{b}(\theta, \delta)$ and their linear combinations. As shown in \cite[Theorem 1.2]{bagbavrus} independently on the choice of $\theta$ and $\delta$ we may consider the set \begin{equation}\label{pmu}P_\mu=\{\mu_1,\mu_2,\mu_3\}
\end{equation} which is a concrete realization of the generators of the first Pauli group $P_1$ via pseudofermionic operators.  On the other hand,  specific choices of $\theta$ and $\delta$ give more precise proportionality relations between \eqref{muoperators} and \eqref{paulirules}. For instance, the choice   $\theta=\pi/2$ and $\delta=0$  allows us to consider
\begin{equation}\label{specialmuoperators1}
	\mu_1 \left( \frac{\pi}{2}, 0 \right)=-Z, \quad \mu_2 \left(\frac{\pi}{2},  0 \right)=-iY, \quad \mu_3 \left(\frac{\pi}{2},  0 \right)=-Y,
\end{equation}
and using $X=iZY$ in \eqref{paulirules} we find that 
\begin{equation}\label{specialmuoperators2}
	X=  \mu_1 \left(\frac{\pi}{2},0 \right) \  \mu_2 \left(\frac{\pi}{2},0 \right), \quad Y=-\mu_3 \left( \frac{\pi}{2},  0 \right), \quad Z=-\mu_1 \left(\frac{\pi}{2},  0 \right).
\end{equation}
Of course, we may use \eqref{firstoperator} and \eqref{mumatrices}, getting in correspondence with   $\theta=\pi/2$ and $\delta=0$
\begin{equation}\label{specialmuoperators3}
	X=  i(\textbf{a}\textbf{b} - \textbf{b}\textbf{a}) \left(\frac{\pi}{2},   0 \right), \quad Y= (\textbf{b}\textbf{a}-\textbf{a}\textbf{b}) \left( \frac{\pi}{2},  0 \right), \quad Z=-(\textbf{b} + \textbf{a}) \left(\frac{\pi}{2}, 0 \right).
\end{equation}
The above computations play a fundamental role in the proof of Theorem \ref{main1}.

\section{Proof of the  main theorem}


Before we prove our main result, we recall separately some terminology from \cite{robinson}:

\begin{definition}[See \cite{robinson}, p.145]\label{centralproduct} A group $G$ is (the internal) central product of its subgroups $H$ and $K$, if $G=HK$ and the derived subgroup of $H$ and $K$ is trivial, that is, \begin{equation}\label{groupcommutator}
[H,K]=\{h^{-1}k^{-1}hk \mid h \in H, k \in K \}=1.
\end{equation} 
\end{definition}

For groups as in Definition \ref{centralproduct}, one has that both the subgroups realizing the central product are normal in the group, that is,  $H$ and $K$ are normal in $G$. Incidentally, \eqref{groupcommutator} is notationally similar to \eqref{heisenbergcommutator}. While in the first case one uses the algebraic structure of group (in multiplicative notation), in order to produce a new object which still has the structure of group, in the second case one uses two functional operators, in order to produce a new functional operator.


\begin{proof}[Proof of Theorem \ref{main1}]

(i). Let us consider the  set of matrices $\{X_j, j=1,2,\ldots,12\}$, where
\begin{equation}\label{matricesx}
X_1=\left(
\begin{array}{cccc}
	0 & 0 & 1 & 0 \\
	0 & 0 & 0 & 1 \\
	1 & 0 & 0 & 0 \\
	0 & 1 & 0 & 0 \\
\end{array}
\right),  X_2=\left(
\begin{array}{cccc}
	0 & 0 & -i & 0 \\
	0 & 0 & 0 & -i \\
	i & 0 & 0 & 0 \\
	0 & i & 0 & 0 \\
\end{array}
\right),
 X_3=\left(
\begin{array}{cccc}
	1 & 0 & 0 & 0 \\
	0 & 1 & 0 & 0 \\
	0 & 0 & -1 & 0 \\
	0 & 0 & 0 & -1 \\
\end{array}
\right),\end{equation}
\begin{equation}  X_4=\left(
\begin{array}{cccc}
	1 & 0 & 0 & 0 \\
	0 & -1 & 0 & 0 \\
	0 & 0 & 1 & 0 \\
	0 & 0 & 0 & -1 \\
\end{array}
\right), X_5=\left(
\begin{array}{cccc}
	0 & 1 & 0 & 0 \\
	1 & 0 & 0 & 0 \\
	0 & 0 & 0 & 1 \\
	0 & 0 & 1 & 0 \\
\end{array}
\right),  X_6=
\left(
\begin{array}{cccc}
		0 & 1 & 0 & 0 \\
	-1 & 0 & 0 & 0 \\
	0 & 0 & 0 & 1 \\
	0 & 0 & -1 & 0 \\
\end{array}
\right),
\end{equation}
\begin{equation}
X_7=\left(
\begin{array}{cccc}
	0 & 0 & 0 & 1 \\
	0 & 0 & 1 & 0 \\
	0 & 1 & 0 & 0 \\
	1 & 0 & 0 & 0 \\
\end{array}
\right),  X_8=\left(
\begin{array}{cccc}
	0 & 0 & -i & 0 \\
	0 & 0 & 0 & i \\
	i & 0 & 0 & 0 \\
	0 & -i & 0 & 0 \\
\end{array}
\right), X_9=\left(
\begin{array}{cccc}
	0 & 0 & 0 & -i \\
	0 & 0 & -i & 0 \\
	0 & i & 0 & 0 \\
	i & 0 & 0 & 0 \\
\end{array}
\right), 
\end{equation}
\begin{equation}
 X_{10}=\left(
\begin{array}{cccc}
	1 & 0 & 0 & 0 \\
0 & -1 & 0 & 0 \\
0 & 0 & -1 & 0 \\
0 & 0 & 0 & 1 \\
\end{array}\right), X_{11}=\left(
\begin{array}{cccc}
	0 & 1 & 0 & 0 \\
	1 & 0 & 0 & 0 \\
	0 & 0 & 0 & -1 \\
	0 & 0 & -1 & 0 \\
\end{array}\right),  
X_{12}=\left(
\begin{array}{cccc}
	0 & 0 & 1 & 0 \\
0 & 0 & 0 & -1 \\
1 & 0 & 0 & 0 \\
0 & -1 & 0 & 0 \\
\end{array}
\right).\end{equation}

Now observe that \begin{equation}\label{matricesxbis}[X_j,X_k]=X_jX_k-X_kX_j \ \  \ \ \mbox{and} \ \ \ \ \{X_j,X_k\}=X_jX_k+X_kX_j.
\end{equation} 
The matrices $X_j$ satisfy a mixed set of rules. For instance we have \begin{equation}\{X_1,X_2\}=\{X_1,X_3\}=0  \ \ \mbox{and} \ \ [X_2,X_4]=[X_2,X_5]=0.\end{equation} This aspect of the matrices is relevant when deducing explicitly the equation of motion, \cite{mer,rom}. To calculate the matrices that commute with all the $X_i$'s we consider a general expression for the following matrix below with  coefficients $f, g, h, \ldots, u \in \mathbb{C}$
\begin{equation}M=\begin{pmatrix}
f &g &h &d
\\
j &k &l &m
\\
n &o &p &q
\\
r &s &t &u
\end{pmatrix}\end{equation}
and calculate the commutator of the matrix $M$ with each of the $X_i$'s.
From a detailed (and long) analysis of the various $[M,X_j]$ it follows that the only matrices which commute with all the $X_j$'s are those proportional to the identity matrix. In fact we have to solve the linear system of 192 (12 $\times$ 16) equations in 16 complex variables. 
Of course, all the operators $X_i$ turns out to be bounded, since they are expressed in terms of finite dimensional complex matrices.

It is now an easy computation to check that $L_\mathcal{S}$ in (\ref{16}) can be written as a linear combination of the $X_j$'s above. In particular, we can check that
\begin{equation}
L_\mathcal{S}=\sum_{k=1}^{12}\alpha_k X_k,
\end{equation}
where the only nonzero $\alpha_k$ are the following:
\begin{equation}
\alpha_1=\frac{1-\alpha}{2}, \quad \alpha_2=\frac{i(1+\alpha)}{2},\quad \alpha_4=\frac{\gamma}{2}, \quad \alpha_7=\frac{\alpha\mu}{2}, \quad \alpha_9=-\frac{\alpha\mu}{2i}, \quad \alpha_{10}=-\frac{\gamma}{2}.
\end{equation}
Note that the solution of the differential equation for $\Psi(t)$ in \eqref{equationofthemodel} turns out to be
\begin{equation}
\Psi(t)=e^{L_\mathcal{S} \ t}\Psi(0),
\end{equation}
which of course can be rewritten in terms of the $X_j$'s. Therefore, recalling that $H_\mathcal{S}=iL_\mathcal{S}$, (i) is proved completely.

(ii). First of all, one can check from \eqref{matricesx} that all the matrices $X_k$ belongs to $\mathrm{SL}_4(\mathbb{C})$.

Now the idea is that we begin to construct two pseudofermionic operators $\textbf{a}$ and $\textbf{b}$ on  $\mathbb{C}^2$ satisfying  \eqref{pseudofermions} and realizing \eqref{p1}.  
The various matrices $X_j$ can be written in terms of pseudofermionic operators $\textbf{A}$, $\textbf{B}$, $\widetilde{\textbf{A}}$ and $\widetilde{\textbf{B}}$, which are deduced as tensor product of matrices in $\mathrm{SL}_2(\mathbb{C})$. Interestingly enough, $\textbf{A}$, $\textbf{B}$, $\widetilde{\textbf{A}}$ and $\widetilde{\textbf{B}}$ can be deduced from some matrices, which appear  in the analysis of the two levels atom interacting with an electromagnetic field as in  Section 3.  Here we may use ${\mathbf a}$ and ${\mathbf b}$ as in \eqref{firstoperator} and introduce
\begin{equation} \label{53}
\textbf{A}:=\textbf{a} \otimes \mathbb{I}, \qquad \textbf{B}:=\textbf{b} \otimes \mathbb{I}.
\end{equation}
It is easy to check that also $\textbf{A}$ and $\textbf{B}$ give a concrete realization for $P_\mu$, but now $\textbf{A}$ and $\textbf{B}$
are expressed by matrices in $\mathrm{SL}_4(\mathbb{C})$ and no longer by matrices in $\mathrm{SL}_2(\mathbb{C})$  as it was for $\textbf{a}$ and $\textbf{b}$.  We have created in this way two pseudofermionic operators $\textbf{A}$ and $\textbf{B}$ on the Hilbert space $\mathbb{C}^4$, via the notion of Kronecker product from \eqref{firstoperator}. We can find another two pseudofermionic operators 
\begin{equation} \label{aandbtilde}
\widetilde{\textbf{A}}:= \mathbb{I} \otimes \textbf{a}, \qquad \widetilde{\textbf{B}}:= \mathbb{I} \otimes \textbf{b}.
\end{equation}
always in the same  Hilbert space $\mathbb{C}^4$, where we described $\textbf{A}$ and $\textbf{B}$. In fact, one can check that also \eqref{aandbtilde} satisfy \eqref{pseudofermions}. We may conclude  that the set \begin{equation}\label{generatingsetone}
\Gamma_\mu = \{\mu_1 \otimes \mathbb{I},  \mu_2 \otimes \mathbb{I}, \mu_3 \otimes \mathbb{I}\}
\end{equation}
represents a concrete realization for $P_1$ via the operators $\textbf{A}$ and $\textbf{B}$ in $\mathbb{C}^4$, moreover in the same Hilbert space we find also the set 
\begin{equation}\label{generatingsetwo}
\Gamma_\nu=\{\mathbb{I} \otimes \mu_1, \mathbb{I} \otimes \mu_2,  \mathbb{I}\otimes \mu_3\}
\end{equation}
which represents another concrete realization for $P_1$, but via the operators $\widetilde{\textbf{A}}$ and $\widetilde{\textbf{B}}$.
Note  that  the Pauli matrices  $X$, $Y$ and $Z$ in \eqref{paulimatrices} may be expressed by \eqref{specialmuoperators2} and \eqref{specialmuoperators3}. Consequently,  $\textbf{A}=\textbf{A}(\theta, \delta)$, $\textbf{B}=\textbf{B}(\theta, \delta)$, $\widetilde{\textbf{A}} = \widetilde{\textbf{A}} (\theta, \delta)$ and $\widetilde{\textbf{B}} = \widetilde{\textbf{B}}(\theta, \delta)$, where $\theta$ and $\delta$  are introduced in \eqref{omega}. Therefore we get
\begin{equation}\label{mutensorproduct}
\mu_1 \otimes \mathbb{I}=\textbf{B}+\textbf{A}, \ \mu_2 \otimes \mathbb{I} =i (\textbf{B}-\textbf{A}),  \ \mu_3 \otimes \mathbb{I} =\textbf{A}\textbf{B}-\textbf{B}\textbf{A},
\end{equation}
\[ \mathbb{I} \otimes \mu_1=  \widetilde{\textbf{B}} + \widetilde{\textbf{A}}, \ \mathbb{I} \otimes \mu_2=  i (\widetilde{\textbf{B}} - \widetilde{\textbf{A}}), \    \mathbb{I} \otimes \mu_3 =\widetilde{\textbf{A}} \widetilde{\textbf{B}}-\widetilde{\textbf{B}}\widetilde{\textbf{A}}.\]
We are going to report some computations for the matrices $X_1, X_2, X_3, X_4, X_5, X_6$. In fact we may write in correspondence with $\theta=\pi/2$ and $\delta=0$
\begin{equation}\label{x1asmus}
X_1=X \otimes \mathbb{I}=\left(\mu_1\left(\frac{\pi}{2},0\right)\mu_2\left(\frac{\pi}{2},0\right)\right) \otimes \mathbb{I}=(i (\textbf{a}\textbf{b}-\textbf{b}\textbf{a}) \otimes \mathbb{I}) \left(\frac{\pi}{2},0\right)
\end{equation}
\[=i ((\textbf{a} \otimes \mathbb{I}) (\textbf{b} \otimes \mathbb{I}) -(\textbf{b} \otimes \mathbb{I}) (\textbf{a}\otimes \mathbb{I})) \left(\frac{\pi}{2},0\right)=i(\textbf{A}\textbf{B}-\textbf{B}\textbf{A})\left(\frac{\pi}{2},0\right),\]
\begin{equation}  X_2=Y \otimes \mathbb{I} = -\mu_3\left(\frac{\pi}{2},0\right) \otimes \mathbb{I}= (\textbf{B}\textbf{A}-\textbf{A}\textbf{B}) \left(\frac{\pi}{2},0\right),  
\end{equation}
\begin{equation} X_3=Z \otimes \mathbb{I}=-\mu_1\left(\frac{\pi}{2},0\right) \otimes \mathbb{I}=-(\textbf{B}+\textbf{A})\left(\frac{\pi}{2},0\right),
\end{equation}    \begin{equation} X_4=  \mathbb{I} \otimes Z = \mathbb{I} \otimes \left(-\mu_1\left(\frac{\pi}{2},0\right)\right)  = - (\widetilde{\textbf{B}} + \widetilde{\textbf{A}}) \left(\frac{\pi}{2},0\right), 
\end{equation}
\begin{equation}\label{x5asmus} X_5=\mathbb{I} \otimes X =   \mathbb{I} \otimes \left(\mu_1\left(\frac{\pi}{2},0\right)\mu_2 \left(\frac{\pi}{2},0\right)\right)   = \mathbb{I} \otimes (i (\textbf{a}\textbf{b}-\textbf{b}\textbf{a})) \left(\frac{\pi}{2},0\right)
\end{equation} 
\[=i (( \mathbb{I}\otimes \textbf{a}) ( \mathbb{I} \otimes \textbf{b} ) -( \mathbb{I} \otimes \textbf{b} ) ( \mathbb{I} \otimes \textbf{a} )) \left(\frac{\pi}{2},0\right)=i(\widetilde{\textbf{A}}\widetilde{\textbf{B}}  - \widetilde{\textbf{B}}\widetilde{\textbf{A}}) \left(\frac{\pi}{2},0\right),\]
\begin{equation}\label{x6}  X_6=i (\mathbb{I} \otimes Y) =   i \left(\mathbb{I} \otimes \left(-\mu_3\left(\frac{\pi}{2},0\right)\right)\right)  = i (\widetilde{\textbf{B}}\widetilde{\textbf{A}} - \widetilde{\textbf{A}}\widetilde{\textbf{B}}) \left(\frac{\pi}{2},0\right).  
\end{equation} 

This means that $\{X_1, X_2, X_3, X_4, X_5, X_6\}$ is a set, which is realized by pseudofermionic operators. Now we claim that the same set is also a set of generators for $P_2$. 

Consider that \begin{equation}\label{uandv}U  \ \mbox{is the Pauli group generated by} \  \Gamma_\mu  \ \ \ \mbox{and} \ \ V  \ \mbox{is the Pauli group generated by}   \ \Gamma_\nu, 
\end{equation} referring to  \eqref{generatingsetone} and  \eqref{generatingsetwo}. Of course, at the level of isomorphism of groups we have
\begin{equation}U \simeq V \simeq P_1
\end{equation} and both $U$ and $V$ are subgroups of $\mathrm{SL}_4(\mathbb{C})$. Having in mind the usual matrix product in $\mathrm{SL}_4(\mathbb{C})$ and the rules in \eqref{paulirules} in terms of \eqref{mutensorproduct}, we have that for \eqref{mutensorproduct} with $\theta=\pi/2$ and $\delta=0$
\begin{equation} \label{decompositionrule}
P_2 = UV=\langle \mu_1 \otimes \mathbb{I},  \mu_2 \otimes \mathbb{I}, \mu_3 \otimes \mathbb{I} \rangle  \ \langle  \mathbb{I} \otimes \mu_1, \mathbb{I} \otimes \mu_2,  \mathbb{I}\otimes \mu_3  \rangle\end{equation} 
\begin{equation} =\langle \mu_1 \otimes \mathbb{I},  \mu_2 \otimes \mathbb{I}, \mu_3 \otimes \mathbb{I}, \mathbb{I} \otimes \mu_1, \mathbb{I} \otimes \mu_2,  \mathbb{I}\otimes \mu_3 \rangle = \langle X_1, X_2, X_3, X_4, X_5, X_6 \rangle.
\end{equation}
Note that \eqref{decompositionrule} reflects at the level of operators the fact that each element of $P_2$ may be written as $M \otimes N=(M \otimes \mathbb{I})(\mathbb{I} \otimes N)$ with $M$ and $N$ which are Pauli matrices, see \eqref{p2}. Consequently, each element of $P_2$ can be written as product of  elements of $\Gamma_\mu$ and $\Gamma_\nu$, that is, $\Gamma_\mu \cup \Gamma_\nu$ generates $P_2$. This means that the claim is true in particular for  \eqref{specialmuoperators1}, \eqref{specialmuoperators2}, \eqref{specialmuoperators3}, hence $\{X_1, X_2, X_3, X_4, X_5,X_6\}$ is a set of generators for $P_2$ in terms of pseudofermionic operators.

(iii). From what we have seen in (ii) above, $P_2=UV$, where $U$ and $V$ are in \eqref{uandv}. We have exactly the  conditions in Definition \ref{centralproduct} if we look at \eqref{uandv}. In fact, the derived subgroup of $U$ and $V$ (according to \eqref{groupcommutator}) is given by \begin{equation}\label{uandvbis}[U,V]=[\langle \mu_1 \otimes \mathbb{I},  \mu_2 \otimes \mathbb{I}, \mu_3 \otimes \mathbb{I} \rangle  , \langle  \mathbb{I} \otimes \mu_1, \mathbb{I} \otimes \mu_2,  \mathbb{I}\otimes \mu_3 \rangle]
\end{equation} 
and it is enough to check that the generators of $U$ and $V$ commute at the group level, in order to conclude that $[U,V]$ is trivial. Now if $i,j \in \{1, 2, 3\}$, then a generic element of \eqref{uandvbis} may be written as
\begin{equation}
{(\mu_i \otimes \mathbb{I})}^{-1}  \ {(\mathbb{I} \otimes \mu_j)}^{-1} \ (\mu_i \otimes \mathbb{I}) \ (\mathbb{I} \otimes \mu_j) 
=(\mu_i^{-1} \otimes \mathbb{I}^{-1}) (\mathbb{I}^{-1} \otimes \mu_j^{-1}) (\mu_i \otimes \mathbb{I}) (\mathbb{I} \otimes \mu_j)
\end{equation}
$$=(\mu_i \otimes \mathbb{I}) (\mathbb{I} \otimes \mu_j) (\mu_i \otimes \mathbb{I}) (\mathbb{I} \otimes \mu_j)=(\mu_i \otimes \mu_j) (\mu_i \otimes \mu_j)
=(\mu_i \mu_i) \otimes (\mu_j \mu_j)
=\mathbb{I} \otimes \mathbb{I}
=\mathbb{I}
$$
and so $[U,V]$ is trivial as claimed. We may conclude that $P_2$ is central product of $U$ and $V$.
\end{proof}

We should make some comments on the ideas in the previous arguments.

\begin{remark}

It is worth noting that the specific values of $\theta$ and $\delta$ in  \eqref{specialmuoperators1} are not the only values that realise $\mu_1$, $\mu_2$ and $\mu_3$ as proportional to the Pauli matrices. We note the following examples which can be produced with a different choice of $\theta$ and $\delta$ :

\begin{equation}\label{ideaofyanga}
\mu_1(0,0)=-Z, \quad \mu_2(0,0)=iX, \quad \mu_3(0,0)=-X,
\end{equation}

In fact $\mu_1, \mu_2, \mu_3$  are proportional to  $X, Y, Z$, if $2 \theta=n \pi$ for   $n \in \mathbb{Z}$ and either  $\omega=0$  or  $\delta=0$. 

Furthermore we note explicitly that the choice of the pseudofermionic operators $\textbf{a}$ and $\textbf{b}$ in the proof of Theorem \ref{main1} has been done, in order to show analogies with the models in  \cite{baginbagbook, tripf, mostaf2, mostaf1}. 

Of course, different choices of pseudofermionic operators $\textbf{a}$ and $\textbf{b}$ can be used in the proof of Theorem \ref{main1}, or, we can keep $\textbf{a}$ and $\textbf{b}$  and  get different expressions of  \eqref{x1asmus}--\eqref{x5asmus} in terms of $\textbf{A}$, $\textbf{B}$, $\widetilde{\textbf{A}}$ and  $\widetilde{\textbf{B}}$. For instance, this last case  happens when we use \eqref{ideaofyanga} instead of \eqref{specialmuoperators1}; we  would get a different formulation of \eqref{x1asmus}--\eqref{x5asmus} always in terms of $\textbf{A}$, $\textbf{B}$, $\widetilde{\textbf{A}}$ and $\widetilde{\textbf{B}}$. 

\end{remark}

Let's note another relevant fact:

\begin{corollary} The statement of Theorem \ref{main1} (ii) is true for six operators $X_k$ which are fermionic, up to similarity in $\mathrm{GL}_4(\mathbb{C})$.
\end{corollary}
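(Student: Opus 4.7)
The plan is to apply Proposition \ref{similarity} at the level of the $2$-dimensional Hilbert space $\mathbb{C}^2$ and then lift the resulting similarity to $\mathbb{C}^4$ by taking a Kronecker product with itself. Since the six generators $X_1,\ldots,X_6$ of $P_2$ are expressed, via \eqref{x1asmus}--\eqref{x6} specialised at $\theta=\pi/2$ and $\delta=0$, as noncommutative polynomials in $\textbf{A}=\textbf{a}\otimes\mathbb{I}$, $\textbf{B}=\textbf{b}\otimes\mathbb{I}$, $\widetilde{\textbf{A}}=\mathbb{I}\otimes\textbf{a}$ and $\widetilde{\textbf{B}}=\mathbb{I}\otimes\textbf{b}$, any simultaneous similarity that converts $(\textbf{a},\textbf{b})$ into a standard fermionic pair will automatically convert these polynomials into polynomials in fermionic operators.

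Concretely, by Proposition \ref{similarity} applied to the pseudofermionic pair $(\textbf{a},\textbf{b})$ of \eqref{firstoperator}, one obtains a positive self-adjoint invertible $\textbf{T}\in\mathrm{GL}_2(\mathbb{C})$ and fermionic operators $\textbf{c},\textbf{c}^*$ satisfying \eqref{ccr} with $\textbf{a}=\textbf{T}\textbf{c}\textbf{T}^{-1}$ and $\textbf{b}=\textbf{T}\textbf{c}^*\textbf{T}^{-1}$. The next step is to set $\mathcal{T}=\textbf{T}\otimes\textbf{T}\in\mathrm{GL}_4(\mathbb{C})$, whose inverse is $\textbf{T}^{-1}\otimes\textbf{T}^{-1}$, and to apply the mixed-product rule for Kronecker products to obtain
\[
\mathcal{T}^{-1}\textbf{A}\mathcal{T}=\textbf{c}\otimes\mathbb{I},\quad \mathcal{T}^{-1}\textbf{B}\mathcal{T}=\textbf{c}^*\otimes\mathbb{I},\quad \mathcal{T}^{-1}\widetilde{\textbf{A}}\mathcal{T}=\mathbb{I}\otimes\textbf{c},\quad \mathcal{T}^{-1}\widetilde{\textbf{B}}\mathcal{T}=\mathbb{I}\otimes\textbf{c}^*.
\]
A short check using \eqref{ccr} then shows that each of these four operators still satisfies the canonical anticommutation relations on $\mathbb{C}^4$ and hence is genuinely fermionic, not merely pseudofermionic.

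Substituting these identities into \eqref{x1asmus}--\eqref{x6} exhibits $\mathcal{T}^{-1}X_k\mathcal{T}$, for $k=1,\ldots,6$, as noncommutative polynomials in fermionic operators, and conjugation by $\mathcal{T}$ maps the group $P_2=\langle X_1,\ldots,X_6\rangle$ isomorphically onto $\mathcal{T}^{-1}P_2\mathcal{T}$. Therefore the conclusion of Theorem \ref{main1}(ii) transfers verbatim with fermionic operators replacing pseudofermionic ones, which is exactly the content of the corollary. The only delicate point, and the main conceptual obstacle, is that a \emph{single} similarity must rectify both pairs $(\textbf{A},\textbf{B})$ and $(\widetilde{\textbf{A}},\widetilde{\textbf{B}})$ at the same time; this is precisely what forces the tensor-square choice $\mathcal{T}=\textbf{T}\otimes\textbf{T}$, since either $\textbf{T}\otimes\mathbb{I}$ or $\mathbb{I}\otimes\textbf{T}$ alone would fermionise only one of the two Kronecker slots.
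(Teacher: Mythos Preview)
Your proof is correct and follows essentially the same approach as the paper's: both invoke Proposition~\ref{similarity} at the $2$-dimensional level to replace the pseudofermionic pair $(\textbf{a},\textbf{b})$ by a genuine fermionic pair $(\textbf{c},\textbf{c}^*)$, and then rerun the argument of Theorem~\ref{main1}(ii). The paper's proof is a one-line sketch pointing to the specific intertwiner $\textbf{T}=\mathbf{S}_\varphi^{1/2}$ from \eqref{fa14}--\eqref{fa15}, whereas you supply the missing detail of how the $2$-dimensional similarity lifts to $\mathrm{GL}_4(\mathbb{C})$ via $\mathcal{T}=\textbf{T}\otimes\textbf{T}$ and explain why the tensor-square choice is forced; this is a genuine clarification of a point the paper leaves implicit.
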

\begin{proof} Repeat the argument  of the proof of Theorem \ref{main1}, using
 Proposition \ref{similarity} along with \eqref{fa14} and \eqref{fa15}, in order to replace  $\textbf{a}$ and $\textbf{b}$   with  fermionic operators $\textbf{c}$ and $\textbf{c}^*$.
\end{proof}

Another important aspect that we isolate in form of corollary is due to the presence of the structure of central product and to its description via pseudofermionic operators:

\begin{corollary}\label{new}
There exist two subgroups $U$ and $V$ of $P_2$ such that $P_2$ is the central product of $U$ and $V$. Moreover $U$ and $V$ are generated by pseudofermionic operators.
\end{corollary}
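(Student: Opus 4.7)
The plan is to observe that this corollary is essentially a repackaging of Theorem \ref{main1}(iii), with the language of central products substituted for the conjunction of conditions $P_2=UV$ and $[U,V]=1$. So I would begin by taking $U$ and $V$ to be precisely the two subgroups of $P_2$ exhibited in Theorem \ref{main1}(iii), namely the subgroups generated by the sets $\Gamma_\mu$ and $\Gamma_\nu$ from \eqref{generatingsetone} and \eqref{generatingsetwo} respectively.

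Next I would verify the two defining conditions of Definition \ref{centralproduct}. The factorisation $P_2 = UV$ is exactly what is recorded in \eqref{decompositionrule} inside the proof of Theorem \ref{main1}(ii)-(iii), since every element of $P_2$ has the form $M\otimes N = (M\otimes \mathbb{I})(\mathbb{I}\otimes N)$ with $M,N$ Pauli matrices. The triviality of the derived subgroup $[U,V]$ was established explicitly at the end of the proof of Theorem \ref{main1}(iii) by a direct computation showing that every generator of $U$ commutes with every generator of $V$ (the Kronecker product decouples the two tensor factors, and each $\mu_i$ is an involution in $P_1$). By Definition \ref{centralproduct}, these two facts together mean precisely that $P_2$ is the central product of $U$ and $V$.

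For the final assertion, I would simply recall that the generators of $U$, namely $\mu_i\otimes \mathbb{I}$ for $i=1,2,3$, are expressible in terms of the pseudofermionic operators $\mathbf{A}$ and $\mathbf{B}$ of \eqref{53} via \eqref{mutensorproduct}, while the generators $\mathbb{I}\otimes \mu_j$ of $V$ are likewise expressible in terms of the pseudofermionic pair $\widetilde{\mathbf{A}},\widetilde{\mathbf{B}}$ of \eqref{aandbtilde}. Thus both generating sets sit inside the algebra generated by pseudofermionic operators on $\mathbb{C}^4$, as required.

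I do not foresee a genuine obstacle here: all the structural work has already been done in the proof of Theorem \ref{main1}. The only care needed is bookkeeping, to make the correspondence between the algebraic formulation (central product) and the operator-theoretic one (pseudofermionic realisations of the generators) explicit and to cite Definition \ref{centralproduct} at the right moment.
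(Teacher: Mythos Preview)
Your proposal is correct and follows exactly the paper's approach: the paper's proof consists of the single line ``Application of Theorem \ref{main1} (iii)'', and you have simply unpacked that application by pointing to \eqref{decompositionrule}, the commutator computation in \eqref{uandvbis}, and the pseudofermionic expressions in \eqref{mutensorproduct}. Nothing is missing or different.
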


\begin{proof} Application of Theorem \ref{main1} (iii).
\end{proof}

\section{Conclusions and analogies with other dynamical systems}

The dynamical system $\mathcal{S}$ of Theorem \ref{main1} is only one of the systems whose dynamics can be described in terms of the matrices $X_1, X_2, \ldots, X_{12}$ in \eqref{matricesx}.

Another dynamical system $\mathcal{T}$ is discussed in \cite{rame2}, involving the same matrices \eqref{matricesx} for the description of its Hamiltonian. Again, the physical system is an electronic circuit, and again the main interest is in the possibility of having a simple experimental device connected with PT symmetry in  quantum mechanics. In this case the dynamical system is described in analogy with $\mathcal{S}$ of Theorem \ref{main1}, but with some minor differences, see \cite{rame2} for more details.

What is interesting  is the fact that the operator $L_\mathcal{T}$ is now replaced by the following symmetric, but not self-adjoint, matrix:
\begin{equation}
H_\mathcal{T}=\left(
\begin{array}{cccc}
	0 & b+ir & d+ir & 0 \\
	b+ir & 0 & 0 & d-ir \\
	d+ir & 0 & 0 & d-ir  \\
	0 &d-ir  & d-ir  & 0 \\
\end{array}
\right).
\end{equation}Interestingly enough, $\{X_k \ \mid \ k=1, 2, \ldots, 12\}$ allows us to  decompose also $H_\mathcal{T}$ as follows:
\begin{equation}\label{decompositionht}
H_\mathcal{T}=\sum_{k=1}^{12}\beta_k X_k,
\end{equation}
where the only non zero $\beta_k$ are the following:
\begin{equation}
\alpha_1=d, \qquad \beta_5=\frac{b+d}{2},\qquad \beta_6=ir, \qquad \beta_{11}=\frac{b-d}{2}+ir.
\end{equation}
This means that Theorem \ref{main1} can be proved also beginning from the dynamical system $\mathcal{T}$ instead of $\mathcal{S}$. The argument of the proof of Theorem \ref{main1} applies in fact to  $X_k$ which appear also in \eqref{decompositionht}. 

More examples can be constructed from the circuits considered in  \cite{bgsa, bagpan}. Again we will find a description for the matrices \eqref{matricesx}--\eqref{matricesxbis} in terms of pseudofermionic operators, for instance, arising from the model described in \eqref{firstoperator}-\eqref{fa15}. This suggests the presence of PT symmetries both in $\mathcal{S}$ and $\mathcal{T}$, but the formalization of the PT symmetry can be difficult to write explicitly; this is made for instance in \cite{tripf}, but not in \cite{bgsa}.

Another aspect, which deserves interest, is due to the presence of the structure of central product in $P_2$; this peculiar structure has been already investigated in \cite{bagbavrus, bavuma2021}. Generally, larger Pauli groups, i.e.: $P_n$ for $n \ge 2$, may be decomposed in central products (see \cite{rocchetto}) so it may be interesting to generalize Theorem \ref{main1} to dynamical systems involving larger Pauli groups.

\end{document}